%% file: main.tex
\documentclass[a4paper,UKenglish,cleveref,thm-restate,numberwithinsect]{article}
\input{makro}

\nolinenumbers
\newcommand{\bvec}[1]{\underline{\mathbf{#1}}}
\newcommand{\bx}{\bvec{x}}
\newcommand{\by}{\bvec{y}}
\newcommand{\bz}{\bvec{z}}
\newcommand{\bu}{\bvec{u}}

\newcommand{\boldmu}{\boldsymbol{\mu}}

\title{Subexponential Algorithms for Total Search Problems in the Dense Pigeonhole Regime}
\author{Jadwiga Czyżewska \thanks{Institute of Informatics, University of Warsaw,
Warsaw, Poland. 
Supported by Polish National Science Centre SONATA BIS-12 grant
No.~2022/46/E/ST6/00143.}
\and  Karolina Drabik \thanks{Institute of Informatics, University of Warsaw,
Warsaw, Poland 
Supported by the European Research Council (ERC), BUKA,
grant agreement No.~101126229.}
\and Marek Sokołowski \thanks{ETH Z\"urich, Z\"urich, Switzerland}
\and Karol Węgrzycki \thanks{Max Planck Institute for Informatics,
Saarland Informatics Campus, Saarbr\"ucken, Germany
Supported by the Deutsche Forschungsgemeinschaft
(DFG, German Research Foundation), grant No.~559177164.}
}

\begin{document}
\maketitle
\thispagestyle{empty}

\begin{abstract}
The class $\mathsf{PPP}$ captures fundamental total search problems, for which
a solution is guaranteed to exist by the pigeonhole principle and the main challenge is to find it.
We ask whether problems in this class admit significantly faster algorithms in the
\emph{dense regime}.

More concretly, we study \emph{binary collision Short Integer Solution} (SIS): given a matrix
$A\in\mathbb{Z}_q^{n\times m}$ with $m>n\log q$,
the goal is to find distinct $\bx,\by\in\{0,1\}^m$ with $A\bx=A\by\pmod q$. Brute-force enumeration can be used to find a collision in $q^{\mathcal{O}(n)} \operatorname{poly}(|I|)$ time, and no significantly faster algorithm for that problem is known.
We present an algorithm running in time
$q^{o(n)}\cdot\operatorname{poly}(|I|)$ whenever the density parameter $\delta\coloneqq m/(n\log q)$ is $\omega(1)$.\footnote{All logarithms are in base $2$, and
$|I|$ is the binary input length.}
Building on reductions of Sotiraki et al.~(FOCS~2018),
we obtain corresponding algorithms for dense variants of weak constrained
SIS, Equal Vector Sums, Minkowski, and Blichfeldt.

Our algorithm adapts techniques of Bhati et al.~(FOCS~2026), who
gave a subexponential algorithm for Equal Sums in the dense regime and proved a tight conditional lower bound.
Their conditional lower bound easily transfers to SIS and shows that the running time of our algorithms is essentially optimal.
\end{abstract}

\clearpage
\setcounter{page}{1}
\hypersetup{pageanchor=true}
\input{intro}
\input{prelim}
\input{main_proof}
\input{corollary_proofs}

\section*{Use of generative AI}

We used OpenAI Codex and GPT-6 to proofread and lightly edit the manuscript (for example, to
draw \cref{fig:pigeonhole-classes}). All the ideas and proofs in this paper were
developed by humans on whiteboards.

\bibliographystyle{plainurl}
\bibliography{bibliography}
\end{document}

%% file: makro.tex
\usepackage[margin=1.25in]{geometry}
\usepackage{libertine}
\usepackage[T1]{fontenc}
\usepackage[utf8]{inputenc}
\usepackage{amsfonts,amssymb,amsmath,amsthm}
\usepackage{calc}
\usepackage{tikz}
\usetikzlibrary{calc}
\usepackage[bottom,para,symbol]{footmisc}
\usepackage{hyperref}
\usepackage{thmtools}
\usepackage{thm-restate}

\usepackage{pdfpages}
\usepackage{epigraph}
\usepackage{lineno}

\usepackage{url}
\usepackage{hyperref}
\usepackage{enumitem}
\usepackage{xspace}
\usepackage{todonotes}
\usepackage{thmtools}
\usepackage{mathrsfs}  
\usepackage{textpos}
\usepackage[normalem]{ulem}
\usepackage[capitalize, nameinlink]{cleveref}

\newtheorem{theorem}{Theorem}[section]

\newtheorem{claim}[theorem]{Claim}

\newtheorem{observation}[theorem]{Observation}

\theoremstyle{definition}
\newtheorem{definition}[theorem]{Definition}

\usepackage[sort,nocompress]{cite}
\usepackage{mathtools}
\usepackage{xspace}
\usepackage{tikz}
\usepackage{hyperref}
\usepackage{amsthm}
\usetikzlibrary{calc}
\usepackage[framemethod=TikZ]{mdframed}
\usepackage{todonotes}
\usepackage[linesnumbered,ruled,vlined]{algorithm2e}
\DontPrintSemicolon

\newenvironment{problemfields}{%
\begin{list}{}{%
\setlength{\labelwidth}{6em}%
\setlength{\labelsep}{0.6em}%
\setlength{\leftmargin}{6.6em}%
\setlength{\rightmargin}{0pt}%
\setlength{\itemindent}{0pt}%
\setlength{\listparindent}{0pt}%
\setlength{\topsep}{0pt}%
\setlength{\partopsep}{0pt}%
\setlength{\parsep}{0pt}%
\setlength{\itemsep}{0.25\baselineskip}%
}}{\end{list}}
\newcounter{prob}
\newenvironment{prob}[2][]{%
\refstepcounter{prob}%
\mdfsetup{frametitle={%
\tikz[baseline=(current bounding box.east),outer sep=0pt]
\node[anchor=east,rectangle,fill=white]
{\strut Problem~\arabic{prob}:~#1};},%
innertopmargin=-0.5em,linecolor=black,linewidth=1pt,roundcorner=6pt,topline=true,%
frametitleaboveskip=\dimexpr-\ht\strutbox\relax}%
\begin{mdframed}[nobreak=true]\relax%
\label{#2}\begin{problemfields}}{\end{problemfields}\end{mdframed}}
\crefname{prob}{problem}{problems}
\Crefname{prob}{Problem}{Problems}

\newcommand{\Oh}{\mathcal{O}}

\newcommand{\xor}{\oplus}

\newcommand{\floor}[1]{\left\lfloor #1\right\rfloor}
\newcommand{\ceil}[1]{\left\lceil #1\right\rceil}

\newcommand{\set}[1]{\{#1\}}


%% file: intro.tex
\section{Introduction}

The class $\mathsf{PPP}$, introduced by Papadimitriou~\cite{Papadimitriou},
formalizes search problems based on the pigeonhole principle.  Its defining
problem asks for either a \emph{collision} (i.e., two distinct inputs with the
same output) or an input that maps to zero, in a function whose input and output
spaces have the same size. The pigeonhole principle guarantees that one of
these outcomes exists, but finding it can still be hard.

The class $\mathsf{PPP}$ contains many natural problems, including Equal Sums,
variants of Short Integer Solution (SIS), Minkowski, and
Blichfeldt~\cite{Papadimitriou,DBLP:conf/focs/SotirakiZZ18}.\footnote{In the Equal Sums problem, we are given $m$ positive integers with a total sum less than $2^m-1$, and the goal is to find two distinct
subsets with the same sum; see \Cref{sec:prelim}.}
It has a natural
connection to cryptography: for example, collision-resistant hash functions fit
naturally into this class~\cite{DBLP:conf/focs/SotirakiZZ18}.  We define the full
versions of these problems in \Cref{sec:our-results} and the formal class
definitions and lattice background are in \Cref{sec:prelim}.

Each of these problems has variants in the subclass $\mathsf{PWPP}$, including
the versions studied below.  This class uses the \emph{weak pigeonhole
principle}: the input circuit has fewer output bits than input bits, which
guarantees a collision. This class has a $\mathsf{PWPP}$-complete problem:
Sotiraki, Zampetakis, and Zirdelis~\cite{DBLP:conf/focs/SotirakiZZ18} introduced
\emph{weak constrained SIS} and proved that every problem in $\mathsf{PWPP}$ reduces to
it in polynomial time.

Generic algorithms for these classes enumerate or randomly sample inputs,
with exponential worst-case running time in the number of output bits.
We therefore ask if we can have better algorithms when the number of
input bits is much larger than the number of output bits.
This is the \emph{dense pigeonhole regime}:
\begin{center}
\emph{Can all these natural PPP problems be solved faster in the dense pigeonhole regime?}
\end{center}

Recently, Bhati et al.~\cite{BhatiDJRW26} gave an algorithm for dense Equal Sums
instances. For $m$ positive integers bounded by $M$, they obtain
$M^{o(1)}\cdot\operatorname{poly}(m)$ time when $m=\omega(\log M)$.  Earlier
algorithms used the pigeonhole promise to improve the dependence on
$m$~\cite{JW24,JWZ25}. We extend their method to binary collision SIS
and obtain an algorithm for weak constrained SIS, whose general form is
$\mathsf{PWPP}$-complete~\cite{DBLP:conf/focs/SotirakiZZ18}.
Explicit reductions give algorithms for dense variants of Equal Vector Sums,
Minkowski, and Blichfeldt as well.

\input{figure1}
\subsection{Our results}\label{sec:our-results}

Throughout, $n$ is positive, $q\geq2$ is a power of two, and logarithms
have base $2$. We write $|I|$ for the binary input length.
We first study the following collision variant of SIS.

\begin{prob}[Binary collision SIS]{def:sis}
\item[Input:] A matrix $A\in\mathbb{Z}_q^{n\times m}$
with $m>n\log q$.
\item[Output:] Two distinct binary vectors
$\bx,\by\in\{0,1\}^m$ such that
$A\bx=A\by\pmod q$.
\item[Parameters:] $n$, $q$, and $\delta\coloneqq m/(n\log q)$.
\end{prob}

There are $2^m$ binary vectors and only $q^n$ possible images under $A$.
Since $m>n\log q$, the pigeonhole principle guarantees a collision.
In cryptographic SIS, the matrix is usually sampled at random and
larger coordinates may be allowed. Our algorithm works for every input
matrix.

Evaluating $A$ on $\Oh(q^n)$ distinct binary vectors and comparing their
images finds a collision in time
$q^n\cdot\operatorname{poly}(|I|)$. Our main result improves this
exponential factor in the dense regime.

\begin{restatable}{theorem}{SISalg}\label{thm:SISalg}
An instance $I$ of binary collision SIS (\Cref{def:sis}) with
parameters $n$, $q$, and $\delta$
can be solved deterministically in time
$q^{\mathcal{O}(n/\log\delta)}\cdot\operatorname{poly}(|I|)$.
\end{restatable}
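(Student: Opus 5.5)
The plan is to find a nonzero vector $\bz\in\{-1,0,1\}^m$ with $A\bz\equiv 0\pmod q$. Writing $\bz=\bx-\by$, with $\bx$ the positive part and $\by$ the negative part, then gives the required distinct binary collision. Set $N\coloneqq n\log q$, so that $m=\delta N$ and $q^n=2^N$. The image $A\bz\bmod q$ is determined by $N$ output bits: bit $j$ of row $i$, for $0\le j<\log q$. I order these bits \emph{level by level}: first bit $0$ of every row, then bit $1$ of every row, and so on. The point of this order is additivity. Suppose a family of integer vectors all satisfy that every output bit in some prefix of this order vanishes. Then the next bit, $(A\bz)_i/2^j \bmod 2$, is an additive map into $\mathbb{Z}_2$ on integer combinations of the family, because no carries arise from lower bits. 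So the problem becomes $N$ linear constraints over $\mathbb{Z}_2$, each living on a shrinking family of vectors.

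The algorithm works in rounds, with chunk size $L=\Theta(N/\log\delta)$. We maintain a family of ternary vectors $\bz_1,\dots,\bz_r$ with pairwise disjoint supports, all satisfying the bits processed so far. Initially this family is the $m$ unit vectors. In each round, take the next $L$ output bits and split the family into blocks of $L+1$ vectors. For each block, enumerate all $2^{L+1}$ signed subset sums $\sum_{i\in S}\bz_i$ and record their $L$ new bits. By pigeonhole, two subsets $S\neq T$ collide, and $\sum_{i\in S}\bz_i-\sum_{i\in T}\bz_i$ is a nonzero vector. Disjointness of the supports makes this vector again ternary. It has zero new bits and keeps all old bits zero by additivity. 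Distinct blocks produce new vectors that again have disjoint supports. Each round costs $2^{L}\operatorname{poly}(|I|)$, which is $q^{\Oh(n/\log\delta)}\operatorname{poly}(|I|)$, and it is deterministic. After $\lceil N/L\rceil$ rounds, any surviving vector is the answer.

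The main obstacle is the counting: each round divides the family size by $L+1$. So we need $m\ge (L+1)^{N/L}$, that is, $(N/L)\log(L+1)\le\log(\delta N)$. When $L=\Theta(N/\log\delta)$ this holds only if $\log L$ is $\Oh(1+\log N/\log\delta)$. Naively this loses a $\log N$ factor in the exponent. To remove it, I would not spend a block on a single collision. Instead, take a block of $s\gg L$ vectors, say $s=2L$, and use the $2^{s}$ subset sums falling into only $2^L$ buckets to extract $\Omega(s/1)$ \emph{disjoint-support} ternary combinations per block. One way is to repeatedly find a collision, delete the used vectors, and continue while at least $L+1$ remain. A better way is to choose, inside a large bucket, subsets forming a chain so that consecutive differences are disjoint. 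The goal is to make the per-round shrinkage a constant factor plus an additive $L$ term, so that $\Oh(\log\delta)$ rounds of $\Oh(N/\log\delta)$ bits each fit within $m=\delta N$.

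The remaining cases are easy. If $\delta$ is $\Oh(1)$, brute force already meets the bound. If $\log\delta\ge N$, then one round with $L=\Oh(1)$ costs polynomial time. Making the per-round loss constant is the step I expect to be genuinely hard, and it is where I would adapt the Bhati et al.\ technique for dense Equal Sums.
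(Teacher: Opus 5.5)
There is a genuine gap, and it is the step you flag yourself. With $L=n$, your basic scheme is the classical algorithm of Regev that the paper cites: its requirement $m\ge(L+1)^{\lceil N/L\rceil}$ becomes exactly $m\ge(n+1)^{\log q}$. Chunking does not rescue the theorem. For instance, with $\delta=N^{1/2}$ every choice with $\lceil N/L\rceil\ge2$ violates $m\ge(L+1)^{\lceil N/L\rceil}$, so you are forced to take $L\ge N$, which is plain brute force, whereas the claimed bound is $2^{\Oh(N/\log N)}$.

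The proposed repair fails as stated. Bucket counting only guarantees that \emph{some} two subsets collide; it says nothing about how many disjoint zero-sum combinations a block contains. Concretely, take the $L$ new bits at a single level, e.g.\ the first round with $L\le n$, where the new bits of a unit vector are the first $L$ entries of its column of $A$ modulo $2$. Then the supports of the nonzero signed combinations of a block with zero new bits are exactly the supports of nonzero codewords of the binary linear code whose parity-check matrix consists of those new bits. Since $A$ is arbitrary, this can be a Gilbert--Varshamov code of length $s=2L$ with minimum distance $\Omega(L)$. Such a block yields only $\Oh(1)$ disjoint-support combinations. This holds whether you extract them greedily or as consecutive differences of a chain inside one bucket, since those differences are disjoint nonzero codewords too. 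Larger blocks still leave a shrinkage factor of order $L/\log(s/L)$. The root cause is that each surviving object is a single vector with no freedom left, so every round must again spend about $L+1$ of them.

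The missing idea, which the paper adapts from Bhati et al., is to keep \emph{lists} rather than single vectors.
\begin{itemize}
\item \emph{Setup.} Split the columns into $k$ blocks, where $k$ is the largest power of two with $k/\log k\le\delta/2$ and $k\le m$. From each block take $W=2^{\lceil m/(2k)\rceil}$ binary vectors together with their images.
\item \emph{Merging.} Merge sibling lists along a binary tree by forming all $W^2$ sums; a repeated sum is already a collision. These sums are pairwise congruent modulo the current moduli $\boldsymbol{\mu}^{j-1}$. Hence they fall into at most $W$ classes modulo $\boldsymbol{\mu}^{j}$, obtained by $\log W$ doublings of a smallest coordinate (this is exactly your level-by-level order). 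Keep $W$ vectors from a largest class.
\item \emph{Effect per round.} Each round fixes roughly $N/\log\delta$ further bits while only halving the number of lists, instead of dividing it by $L+1$.
\item \emph{Conclusion.} After $\log k\approx\log\delta$ rounds, $W^{\log k}\ge q^n$ forces every modulus to be a multiple of $q$. So any two elements of the last list collide, at total cost $W^2\cdot\operatorname{poly}(|I|)=q^{\Oh(n/\log\delta)}\cdot\operatorname{poly}(|I|)$.
\end{itemize}
A smaller point: when a chunk spans two levels, individual bits are not additive. You should compare residues modulo the current moduli, which is all the pigeonhole step needs.
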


When $\delta=\omega(1)$, the running time is
$q^{o(n)}\cdot\operatorname{poly}(|I|)$, with subexponential dependence on
$n\log q$.  In the other regime, our algorithm is even polynomial 
whenever $n=\mathcal\Oh((\log m)^2/\log q)$. The same polynomial-time guarantee holds
for all the consequences below (using their respective density parameters). To
the best of our knowledge, this guarantee was not previously known for any of
the variants presented below.

We would like to note that improving the dependence on $\log\delta$ further is
unlikely with current techniques: as observed by Bhati et al.~\cite{BhatiDJRW26},
any improvements would yield a breakthrough for the Successive Minima Problem (SMP).

\begin{restatable}[cf.~\cite{BhatiDJRW26}]{lemma}{SISlowerbound}\label{lem:sis-lower-bound}
Any algorithm that solves binary collision SIS
in time $q^{o(n/\log\delta)}\cdot\operatorname{poly}(|I|)$
would improve the best known algorithms for SMP, even when $\delta = \operatorname{poly}(|I|)$ and $n=\Oh(1)$.
\end{restatable}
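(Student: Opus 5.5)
The plan is to reduce dense Equal Sums to binary collision SIS with $n=1$, and then use the conditional lower bound of Bhati et al.~\cite{BhatiDJRW26}. They show that an algorithm for Equal Sums running in time $M^{o(1/\log\delta')}\cdot\operatorname{poly}(m)$, where $\delta'\coloneqq m/\log M$, would improve the best known algorithms for SMP. Their hard instances have $\delta'=\operatorname{poly}(m)$. So it suffices to embed such an instance into binary collision SIS without blowing up the parameters. Concretely, we must preserve the exponent $\log M/\log\delta'$ up to constant factors, with $n=1$ (so certainly $n=\Oh(1)$) and $\delta=\operatorname{poly}(|I|)$.

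The reduction is as follows. Take an Equal Sums instance $a_1,\dots,a_m$ of positive integers with $\sum_i a_i<2^m-1$ and each $a_i\le M$. Let $q$ be the smallest power of two with $q>\sum_i a_i$, so $q\le 2mM$ and $\log q\le\log M+\log m+1$. Output the $1\times m$ matrix $A=(a_1,\dots,a_m)\in\mathbb{Z}_q^{1\times m}$.

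For correctness, note that for binary $\bx$ the value $A\bx$ is an integer in $[0,\sum_i a_i]\subseteq[0,q)$. Hence $A\bx\equiv A\by\pmod q$ holds if and only if the two subset sums are equal over the integers. A collision therefore gives two distinct subsets with equal sums.

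For validity, we need $m>\log q$. If $\sum_i a_i<2^{m-1}$ this is immediate. Otherwise we instead take $q=2^m$: this is still larger than $\sum_i a_i$, but it violates $m>n\log q$. To avoid this edge case we pad the instance with one extra element $a_{m+1}=1$. Alternatively, we can rely on the fact that the hard instances of \cite{BhatiDJRW26} have $m=\omega(\log M)$, so that $\log q=\log M+\Oh(\log m)<m$ holds anyway. I would use the latter, since it keeps the reduction trivial.

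It remains to track the parameters, which is the main point to get right. We have $\delta=m/\log q$ and $\log q=\Theta(\log M)$ whenever $\log m=\Oh(\log M)$. Hence $\delta=\Theta(\delta')$ and $\log\delta=\Theta(\log\delta')$. A hypothetical SIS algorithm running in time $q^{o(n/\log\delta)}\operatorname{poly}(|I|)$ then runs, on these instances, in time
\[
2^{o(\log q/\log\delta)}\operatorname{poly}(m\log M)=M^{o(1/\log\delta')}\operatorname{poly}(m).
\]
This contradicts the Equal Sums barrier.

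The main obstacle is the regime where $\log m$ dominates $\log M$, because there $\log q$ is not $\Theta(\log M)$. I expect to handle it by noting that in the relevant instances of \cite{BhatiDJRW26} (which come from SMP with $\delta'=\operatorname{poly}(m)$) we have $\log M\ge\Omega(\log m)$. Otherwise $M^{1/\log\delta'}$ is polynomial and the claimed bound is vacuous. Finally, $\delta=\operatorname{poly}(|I|)$ follows from $\delta\le m\le|I|$.
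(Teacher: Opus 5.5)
Your proposal is correct and is essentially the paper's proof. Like the paper, you embed the hard Equal Sums instances of Bhati et al.\ as a one-row matrix whose power-of-two modulus exceeds every subset sum, which gives $n=1$, $\log q=\Theta(\log M)$ and $\log\delta=\Theta(\log m)$. The differences are cosmetic: the paper takes $q>mM$ and uses $M=2^{\Theta(\sqrt m\log m)}$ explicitly, while you take $q>\sum_i a_i$ and phrase the barrier via $\delta'$.

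One correction on an aside: the padding alternative ($a_{m+1}=1$) would not be a valid reduction, since a collision in the padded instance may use the new element and then gives no solution of the original instance. Relying on $m=\omega(\log M)$, as you chose to do, is the right choice.
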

We recall the SMP and the proof in \Cref{sec:lattice-lower-bound}.

\paragraph*{Consequences}\label{sec:applications}

Weak constrained SIS, introduced
in~\cite{DBLP:conf/focs/SotirakiZZ18}, adds linear constraints to binary
collision SIS: both output vectors must also satisfy $G\bx=G\by=0$.
The matrix $G$ is \emph{binary invertible}, which allows us to choose
some coordinates of the vectors freely and then compute the remaining coordinates
uniquely (see~\Cref{def:binary-invertible} for a formal definition).

\begin{prob}[Weak constrained SIS]{def:wcsis}
\item[Input:] Matrices $A\in\mathbb{Z}_q^{n\times m}$ and
$G\in\mathbb{Z}_q^{d\times m}$, where $G$ is binary invertible
and $m>(n+d)\log q$.
\item[Output:] Two distinct binary vectors
$\bx,\by\in\{0,1\}^m$ such that
$A\bx=A\by$ and
$G\bx=G\by=0\pmod q$.
\item[Parameters:] $n$, $m$, $d$, $q$, and $\delta\coloneqq (m-d\log q)/((n+d)\log q)$.
\end{prob}

Binary invertibility of $G$ lets us construct valid output vectors $\bx$ in the following way:
choose $m-d\log q$ coordinates of $\bx$ freely.
Then, by the design of $G$, each choice uniquely determines the other $d\log q$ coordinates.
Thus, exactly $2^{m-d\log q}>q^n$ binary vectors satisfy $G\bx=0$, and
two of them must have the same image under $A$.
The same argument works with $0$ replaced by any target
$\mathbf{b}\in\mathbb Z_q^d$.

\begin{restatable}{corollary}{denseWcSIS}\label{cor:dense-wcsis}
If $\delta>1$, weak constrained SIS (Problem~\ref{def:wcsis}) can be solved in time
$q^{\mathcal O((n+d)/\log\delta)}\cdot\operatorname{poly}(|I|)$.
\end{restatable}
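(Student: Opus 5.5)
We reduce weak constrained SIS to binary collision SIS (\Cref{def:sis}) and then apply \Cref{thm:SISalg}. By binary invertibility of $G$ (\Cref{def:binary-invertible}), there is a set $F$ of $m' \coloneqq m-d\log q$ free coordinates. Let $\phi\colon\{0,1\}^{m'}\to\{0,1\}^m$ be the map that fixes these coordinates and fills in the remaining $d\log q$ coordinates uniquely so that $G\phi(\mathbf{z})=0$. The map $\phi$ is injective and computable in polynomial time. If we had a collision $\mathbf{z}\neq\mathbf{z}'$ with $A\phi(\mathbf{z})=A\phi(\mathbf{z}')$, then $\bx=\phi(\mathbf{z})$ and $\by=\phi(\mathbf{z}')$ would be a valid output. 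The main point is to make the composite map $\mathbf{z}\mapsto A\phi(\mathbf{z})$ linear over $\mathbb{Z}_q$, since \Cref{thm:SISalg} needs a matrix as input.

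The plan is to follow the Sotiraki et al.\ reduction. Binary invertibility is presumably defined so that the dependent bits can be written as a linear function of the free bits. The standard construction of Sotiraki et al.\ takes $G$ block-diagonal-like, with the gadget row $(1,2,\dots,2^{\log q-1})$ on the dependent coordinates. This makes $-G_F\mathbf{z} \bmod q$ be written in binary, and the binary digits are not linear over $\mathbb{Z}_q$. I would instead avoid that issue with the following step. Compute the dependent block as the binary decomposition of a $\mathbb{Z}_q$-value $\mathbf{v}=-G_F\mathbf{z}$. Then multiplying those bits by the gadget $g$ recovers $\mathbf{v}$ linearly. So $A\phi(\mathbf{z}) = A_F\mathbf{z}+A_D\,\mathrm{bits}(G_F\mathbf{z})$. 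To equate images it therefore suffices to have both
\[
A_F\mathbf{z}=A_F\mathbf{z}' \quad\text{and}\quad G_F\mathbf{z}=G_F\mathbf{z}' \pmod q,
\]
since equal $G_F$-values give identical dependent bits. Hence I form the stacked matrix $A'=\binom{A_F}{G_F}\in\mathbb{Z}_q^{(n+d)\times m'}$. This matrix is fully linear.

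Next I apply \Cref{thm:SISalg} to $A'$. The condition $m'>(n+d)\log q$ is exactly $\delta>1$, with the density parameter $\delta=m'/((n+d)\log q)$ matching \Cref{def:wcsis}. We obtain distinct $\mathbf{z},\mathbf{z}'$ with $A'\mathbf{z}=A'\mathbf{z}'$ in time $q^{\Oh((n+d)/\log\delta)}\operatorname{poly}(|I|)$. Then $\phi(\mathbf{z})\neq\phi(\mathbf{z}')$ (they differ on $F$), both satisfy $G\cdot=0$, and $A\phi(\mathbf{z})=A\phi(\mathbf{z}')$ by the displayed sufficient condition.

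The main obstacle is the step that relies on the precise form of \Cref{def:binary-invertible}. I must check that the dependent coordinates are determined by $G_F\mathbf{z}$ alone, so that equal $G_F$-images force equal completions. If instead the definition is a general triangular structure, I would argue inductively that every dependent bit is a function of $G_F\mathbf{z}$, or of partial sums thereof. If needed, I would add those partial sums as extra rows. Any such extra rows must stay within $\Oh(d)$ rows of $\mathbb{Z}_q$, so that the exponent stays $\Oh((n+d)/\log\delta)$.
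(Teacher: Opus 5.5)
Your proposal is correct and is essentially the paper's proof: stack $A_F$ on $G_F$ (the paper's $A_1$ and $G_1$, i.e., the last $m-d\log q$ columns), apply \Cref{thm:SISalg} to this $(n+d)\times(m-d\log q)$ instance, whose density is exactly $\delta$, and complete both colliding suffixes with the same binary prefix, so that $A\bx-A\by=A_F(\mathbf{z}-\mathbf{z}')=0$. The obstacle you flag goes away without extra rows: under \Cref{def:binary-invertible}, back-substitution over rows $d,d-1,\ldots,1$ (where $U$ only touches later blocks) makes the prefix a function of $G_F\mathbf{z}$ alone; your explicit formula $A_D\,\mathrm{bits}(G_F\mathbf{z})$ is imprecise (it should be $\mathrm{bin}(-G_F\mathbf{z})$, and then only when $U=0$), but your argument never actually relies on it.
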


For parameters satisfying this stronger density condition, the corollary
also gives an algorithm for finding collisions in the hash-function
family of~\cite{DBLP:conf/focs/SotirakiZZ18}.

Equal Vector Sums replaces the integers in Equal Sums by integer
vectors: the two subsets must have the same sum in every coordinate.
Such equalities also appear in bounds on the number of nonzero variables
needed in integer programs~\cite{ES06}.
We assume that every input vector is nonzero, since a zero vector and
the empty subset already give equal sums.

\begin{prob}[Equal Vector Sums]{def:integer-sums}
\item[Input:] Nonzero vectors $\mathbf{a}_1,\ldots,\mathbf{a}_m\in\mathbb Z^n$
with $\delta>1$ for the parameters below.
\item[Output:] Distinct subsets $S,T\subseteq[m]$ with
$\sum_{i\in S}\mathbf{a}_i=\sum_{i\in T}\mathbf{a}_i$ over the integers.
\item[Parameters:] $n$, $m$, $\Delta\coloneqq\max_i\|\mathbf{a}_i\|_\infty$,
$q\coloneqq2^{\ceil{\log(m\Delta+1)}}$, and $\delta\coloneqq m/(n\log q)$.
\end{prob}

For $n=1$ and positive inputs, this is a dense version of Equal Sums.
The following corollary extends the dense scalar algorithm
of~\cite{BhatiDJRW26} to several coordinates.

\begin{restatable}[Equal Vector Sums]{corollary}{equalVectorSums}\label{cor:integer-sums}
Equal Vector Sums (Problem~\ref{def:integer-sums}) can be solved in time
$q^{\mathcal O(n/\log\delta)}\cdot\operatorname{poly}(|I|)$ for $\delta > 1$.
\end{restatable}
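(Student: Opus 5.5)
We reduce Equal Vector Sums to binary collision SIS and then apply \Cref{thm:SISalg}. Given nonzero $\mathbf{a}_1,\ldots,\mathbf{a}_m\in\mathbb Z^n$ with $\Delta=\max_i\|\mathbf{a}_i\|_\infty$, we form the matrix $A\in\mathbb Z_q^{n\times m}$ whose $i$-th column is $\mathbf{a}_i\bmod q$, where $q=2^{\ceil{\log(m\Delta+1)}}$ is a power of two as required. The density parameter of this SIS instance is $m/(n\log q)$, which is exactly the $\delta$ of \Cref{def:integer-sums}. The assumption $\delta>1$ gives $m>n\log q$, so the instance is a valid input to \Cref{def:sis}. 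The construction takes polynomial time, and the new instance has size $\operatorname{poly}(|I|)$ because $\log q=\Oh(\log m+\log\Delta)$ is polynomial in the input length.

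Next we run the algorithm of \Cref{thm:SISalg}. It returns distinct $\bx,\by\in\{0,1\}^m$ with $A\bx\equiv A\by\pmod q$ in time $q^{\Oh(n/\log\delta)}\cdot\operatorname{poly}(|I|)$. We output $S=\{i:x_i=1\}$ and $T=\{i:y_i=1\}$, which are distinct because $\bx\neq\by$.

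It remains to lift the congruence to an equality over the integers, and this is the only step that needs care. Consider the integer vector $\mathbf{v}=\sum_i(x_i-y_i)\mathbf{a}_i$. Since $x_i-y_i\in\{-1,0,1\}$, every coordinate of $\mathbf{v}$ has absolute value at most $m\Delta$. Each coordinate is also divisible by $q$. Because $q\ge m\Delta+1$, the interval $[-m\Delta,m\Delta]$ contains no multiple of $q$ other than $0$. Hence $\mathbf{v}=0$, that is, $\sum_{i\in S}\mathbf{a}_i=\sum_{i\in T}\mathbf{a}_i$ over $\mathbb Z$.

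One might worry that $\bx,\by$ could fail to be disjoint. This is harmless: equality of the sums over $S$ and $T$ is all the problem asks for, and distinctness of the sets already follows from $\bx\neq\by$. The total running time is therefore $q^{\Oh(n/\log\delta)}\cdot\operatorname{poly}(|I|)$.
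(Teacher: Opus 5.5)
Your proposal is correct and matches the paper's proof essentially step for step. Both reduce the columns modulo the smallest power of two $q>m\Delta$ and invoke \Cref{thm:SISalg} with the same density $\delta$. Both then lift the congruence to an integer equality via $\|\mathbf{v}\|_\infty\le m\Delta<q$, noting that $\log q=\mathcal O(\log m+\log\Delta)$ keeps the reduced instance polynomial in size.
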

Common elements can be deleted from the two subsets,
so, if $A$ has the input vectors as its columns, we obtain a
nonzero $\mathbf{z}\in\{-1,0,1\}^m$ with $A\mathbf{z}=0$ over the
integers.

Minkowski's problem asks for a short nonzero \emph{lattice vector}
in the \emph{modular lattice} $\Lambda_q^\perp(A)$ prescribed by a~matrix $A$
(we recall basic facts about lattices in \Cref{sec:lattice-prelim}).

\begin{prob}[Dense Minkowski]{def:minkowski}
\item[Input:] A matrix $A\in\mathbb Z_q^{n\times m}$
with $m>n\log q$.
\item[Output:] A nonzero vector $\mathbf{z}\in\Lambda_q^\perp(A)$ with
$\|\mathbf{z}\|_\infty\leq\det(\Lambda_q^\perp(A))^{1/m}$.
\item[Parameters:] $n$, $m$, $q$, and $\delta\coloneqq m/(n\log q)$.
\end{prob}

The density promise makes the bound on the infinity norm less than $2$, so the
required vector has coordinates in $\{-1,0,1\}$.  Minkowski's theorem is an
important tool in mathematics that guarantees a vector satisfying the bound
above~\cite{Minkowski10}.  Sotiraki et
al.~\cite{DBLP:conf/focs/SotirakiZZ18} showed that its general computational
form belongs to $\mathsf{PPP}$. Here we give a subexponential-time algorithm for it in
the dense regime.

\begin{restatable}[Dense Minkowski]{corollary}{denseMinkowski}\label{cor:minkowski}
Dense Minkowski (Problem~\ref{def:minkowski}) can be solved in time
$q^{\mathcal O(n/\log\delta)}\cdot\operatorname{poly}(|I|)$ for $\delta>1$.
\end{restatable}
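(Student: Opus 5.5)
We reduce Dense Minkowski directly to binary collision SIS (\Cref{def:sis}) and invoke \Cref{thm:SISalg}; the parameters $n$, $q$, and $\delta$ are the same for both problems.

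Given $A\in\mathbb Z_q^{n\times m}$ with $m>n\log q$, we run the algorithm of \Cref{thm:SISalg} on the same matrix. Its hypothesis is exactly $m>n\log q$, so it returns distinct $\bx,\by\in\{0,1\}^m$ with $A\bx=A\by\pmod q$. We output $\mathbf z\coloneqq\bx-\by$. Then $A\mathbf z=0\pmod q$, so $\mathbf z\in\Lambda_q^\perp(A)$. Moreover, $\mathbf z\neq 0$ because $\bx\neq\by$, and $\mathbf z\in\{-1,0,1\}^m$, so $\|\mathbf z\|_\infty=1$. The reduction takes polynomial time, so the total running time is $q^{\mathcal O(n/\log\delta)}\cdot\operatorname{poly}(|I|)$.

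It remains to check the norm bound, namely $1\le\det(\Lambda_q^\perp(A))^{1/m}$. This follows from $\det(\Lambda_q^\perp(A))\ge 1$, which holds for every full-rank integer lattice because its determinant is a positive integer. The lattice $\Lambda_q^\perp(A)$ has full rank since it contains $q\mathbb Z^m$. More precisely, $\det(\Lambda_q^\perp(A))=[\mathbb Z^m:\Lambda_q^\perp(A)]=|A\mathbb Z^m \bmod q|$, which lies between $1$ and $q^n$. The upper bound $q^n$ is what makes the threshold $\det^{1/m}\le q^{n/m}<2$ under the density promise. So the $\{-1,0,1\}$ shape of $\mathbf z$ is exactly what is required, and the promise $\delta>1$ is precisely $m>n\log q$.

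I expect no real obstacle here. The only point needing care is to state the lattice facts from \Cref{sec:lattice-prelim}: $\det(\Lambda_q^\perp(A))$ equals the index of $\Lambda_q^\perp(A)$ in $\mathbb Z^m$, and in particular it is a positive integer.
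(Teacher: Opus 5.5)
Your proposal is correct and is essentially the paper's argument. The paper formally routes through Dense Blichfeldt with $\mathbf{p}=0$ and $L_i=2$, but that reduction produces exactly the SIS matrix $C=A$, so the composite reduction is your direct one, and the norm check via $\det(\Lambda_q^\perp(A))\geq 1$ is the same.
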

Blichfeldt's theorem is a basic tool in the geometry of numbers and
gives a standard proof of Minkowski's theorem~\cite{Blichfeldt14}.
Its general computational form is $\mathsf{PPP}$-complete.
Several lattice approximation problems can be solved using a
polynomial number of calls to a Blichfeldt
solver~\cite{DBLP:conf/focs/SotirakiZZ18}.
We consider the following version for integer points in a box and
modular lattices, defined in \Cref{sec:lattice-prelim}.

\begin{prob}[Dense Blichfeldt]{def:weak-blichfeldt}
\item[Input:] A matrix $A\in\mathbb Z_q^{n\times m}$,
$\mathbf{p}\in\mathbb Z^m$, and powers of two $L_1,\ldots,L_m\geq2$.
Let
$B=\{\mathbf{x}\in\mathbb Z^m:p_i\leq x_i<p_i+L_i\text{ for all }i\in[m]\}$
and require $|B|>q^n$.
\item[Output:] Distinct points $\mathbf{x},\mathbf{y}\in B$ with
$\mathbf{x}-\mathbf{y}\in\Lambda_q^\perp(A)$.
\item[Parameters:] $n$, $q$, and $\delta\coloneqq\log|B|/(n\log q)$.
\end{prob}

\begin{restatable}[Dense Blichfeldt]{corollary}{weakBlichfeldtBoxes}\label{cor:weak-blichfeldt}
Dense Blichfeldt (Problem~\ref{def:weak-blichfeldt}) can be solved in time
$q^{\mathcal O(n/\log\delta)}\cdot\operatorname{poly}(|I|)$ for $\delta>1$.
\end{restatable}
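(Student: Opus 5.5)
The plan is a direct reduction to binary collision SIS (\Cref{def:sis}), after which \Cref{thm:SISalg} does the work. The idea is to write each point of the box $B$ in binary relative to its corner $\mathbf{p}$. Since every $L_i$ is a power of two, set $\ell_i\coloneqq\log L_i\geq 1$. Then every $\mathbf{x}\in B$ can be written uniquely as
\[
x_i=p_i+\sum_{j=0}^{\ell_i-1}2^j b_{i,j},\qquad b_{i,j}\in\{0,1\}.
\]
Conversely, every bit vector $\mathbf{b}\in\{0,1\}^{m'}$ with $m'\coloneqq\sum_i\ell_i=\log|B|$ gives a point of $B$. So the map $\mathbf{b}\mapsto\mathbf{x}(\mathbf{b})$ is a bijection between $\{0,1\}^{m'}$ and $B$.

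Next I build $A'\in\mathbb{Z}_q^{n\times m'}$ whose column indexed by $(i,j)$ is $2^j\mathbf{a}_i \bmod q$, where $\mathbf{a}_i$ is the $i$-th column of $A$. By linearity,
\[
A\,\mathbf{x}(\mathbf{b})\equiv A\mathbf{p}+A'\mathbf{b}\pmod q .
\]
Hence a collision $\mathbf{b}\neq\mathbf{b}'$ with $A'\mathbf{b}\equiv A'\mathbf{b}'$ yields two points $\mathbf{x}=\mathbf{x}(\mathbf{b})$ and $\mathbf{y}=\mathbf{x}(\mathbf{b}')$. They are distinct because the map is injective. They also satisfy $A(\mathbf{x}-\mathbf{y})\equiv0\pmod q$, i.e.\ $\mathbf{x}-\mathbf{y}\in\Lambda_q^\perp(A)$, which is exactly the required output.

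It remains to check that $A'$ is a valid SIS instance with the right parameters. The promise $|B|>q^n$ gives $m'=\log|B|>n\log q$, so $A'$ satisfies the input condition of \Cref{def:sis}. Its density is $m'/(n\log q)=\delta$, the same as the Blichfeldt density, and $n$ and $q$ are unchanged. For the size, each $L_i$ is written in binary in the input, so $\ell_i\le|I|$ and $m'\le m|I|\le|I|^2$. Thus $A'$ has $\operatorname{poly}(|I|)$ entries of $\log q\le|I|$ bits each, and it, together with the decoding $\mathbf{b}\mapsto\mathbf{x}$, is computable in polynomial time.

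Applying \Cref{thm:SISalg} to $A'$ then gives running time $q^{\mathcal O(n/\log\delta)}\cdot\operatorname{poly}(|I'|)=q^{\mathcal O(n/\log\delta)}\cdot\operatorname{poly}(|I|)$. There is no real obstacle here. The only points needing care are the uniqueness of the binary decomposition, which is what makes distinct bit vectors give distinct points, and the polynomial bound on $m'$, which is what keeps the $\operatorname{poly}(|I'|)$ factor polynomial in $|I|$.
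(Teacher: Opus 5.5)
Your proposal is correct and follows essentially the same route as the paper's proof. The paper also uses the binary-expansion bijection $\{0,1\}^{\log|B|}\to B-\mathbf{p}$ and the matrix with columns $2^j\mathbf{a}_i \bmod q$, and then applies \Cref{thm:SISalg} with $n$, $q$, $\delta$ unchanged; your explicit bound on the number of columns $m'$ is a slightly more careful version of the paper's remark on instance size.
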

\Cref{fig:pigeonhole-classes} shows the reductions and their corollary numbers.

\subsection{Technical overview}

We adapt the block-merging algorithm of~\cite{BhatiDJRW26} to prove
\Cref{thm:SISalg}. For $\delta = \Oh(1)$, brute force gives the required bound.
For $\delta = \omega(1)$, we choose the largest power of two $k\geq2$ with
$k\leq m$ and $k/\log k\leq\delta/2$, and put $W \coloneqq {2^{\ceil{\delta n \log q / (2k)}}}$.
Divide the columns into $k$ blocks $T_0^0,\ldots,T_{k-1}^0$, each of size at least
$\lfloor m/k\rfloor$. Each block supplies at least $W$ binary vectors, as follows.
For each $i\in[0,k)$, we construct a set $Y_i^0$ of size $W$ by taking any
$W$ distinct binary vectors supported on $T_i^0$ and computing $A\bx$;
this is possible since there are $2^{|T_i^0|} \geq 2^{\lfloor m/k\rfloor} \geq W$ such vectors.
A repeated image within a block is already a collision.
For every $\mathbf y\in Y_i^0$, we store the binary vector
$f(\mathbf y)$ that produced $\mathbf y$.
All blocks $T^0_i$ are pairwise disjoint, so different pairs have different merged binary vectors.

Let $\boldsymbol{\mu}^0=\mathbf1_n$ be the vector of $n$ ones.
To obtain $\boldsymbol{\mu}^j$, start with $\boldsymbol{\mu}^{j-1}$ and repeat
the following step $\log W$ times: multiply a coordinate with the smallest
current value by $2$, breaking ties by the smallest index.
Throughout this, the entries of $A$ are integers in
$[0,q)$, and all products and sums are computed over the integers.
See~\Cref{alg:collision}.

\begin{algorithm}[H]
\SetAlgoNoLine
\caption{Algorithm behind \Cref{thm:SISalg}.}
\label{alg:collision}
Partition $[m]$ into $T_0^0\uplus\cdots\uplus T_{k-1}^0$ as above\;
$Y_i^0\gets W$ images from $T_i^0$ for every $i\in[0,k)$,
and return a collision if found\;
\For{$j=1,\ldots,\log k$ \textnormal{\textbf{and}} $i=0,\ldots,k/2^j-1$}{
    Form all $W^2$ pairs from $Y_{2i}^{j-1}$ and $Y_{2i+1}^{j-1}$\;
    Return a collision if two pair sums agree, and let $Z$ be the sums\;
    Partition $Z$ into classes
    $\mathcal B_{\mathbf{p}}\gets\{\mathbf{z}\in Z:\mathbf{z}\bmod\boldsymbol{\mu}^j=\mathbf{p}\}$\;
    $\mathbf{p}^\star\gets\arg\max_{\mathbf{p}} |\mathcal B_{\mathbf{p}}|$\;
    $Y_i^j\gets$ any $W$ entries of $\mathcal B_{\mathbf{p}^\star}$\;
}
Pick any two elements of $Y_0^{\log k}$ and report a~collision between them
\end{algorithm}

Throughout the procedure, we keep the following invariants: (i) all vectors of $Y^j_i$ are pairwise congruent modulo $\boldsymbol{\mu}^{j}$ and (ii) each $Y_{i}^j$ has size $W$. For $j=0$, both conditions are true. For the induction step, the main point is that if we did not find a collision, then $|Z| = |Y_{2i}^{j-1}| |Y_{2i+1}^{j-1}| = W^2$. The algorithm partitions $Z$ into at most $W$ disjoint classes, so at least one of them will have size at least $W$.

We claim that this procedure eventually finds a correct collision.
Collisions found within the \textbf{for}-loop are obviously correct, so assume
no such collision within this loop was found. By the invariants, the set
$Y_0^{\log{k}}$ contains at least two elements. These are congruent modulo
$\boldsymbol{\mu}^{\log k}$. Since $W^{\log k} \geq q^n$, this implies that
all moduli in $\boldsymbol{\mu}^{\log k}$ are (multiples of) $q$.
This in turn means that these two elements of $Y_0^{\log{k}}$ map to the same value $\bmod\;q$,
which means they form a~collision that can be reported in the final step of the algorithm!

For the running time, observe that each merge takes
$W^2\cdot\operatorname{poly}(|I|)$ time.  By the maximality of $k$, we have
$W^2\leq4q^{4n/\log\delta}$. Since there are fewer than $m$ merges, this yields
the running time in \Cref{thm:SISalg}.  See~\Cref{sec:main-proof} for the full
proof.

\paragraph*{Weak constrained SIS in the dense setting.}
The reduction for \Cref{cor:dense-wcsis} must find a collision whose
two vectors also have the prescribed image $\mathbf{b}$ under a binary
invertible matrix $G\in\mathbb Z_q^{d\times m}$.
Informally, the first $d\log q$ columns of $G$ form $d$ blocks of $\log q$
columns: row $i$ has the binary weights $1,2,4,\ldots,2^{\log q-1}$ in
block $i$, zeros in earlier blocks, and arbitrary entries in all later
columns (see \Cref{def:binary-invertible}).
For example, for $q=16$ and $d=3$, every matrix of the following form
is binary invertible:
\[
    G=\left[\begin{array}{cccc|cccc|cccc|c}
        2^0&2^1&2^2&2^3&*&*&*&*&*&*&*&*&*\\
        0&0&0&0&2^0&2^1&2^2&2^3&*&*&*&*&*\\
        0&0&0&0&0&0&0&0&2^0&2^1&2^2&2^3&*
    \end{array}\right].
\]
Each star is an arbitrary element of $\mathbb Z_{16}$.
The three diagonal blocks encode four binary bits each, and the last
column is $G_1$.
Split $A=[A_0\ A_1]$ and $G=[G_0\ G_1]$, with the first
$d\log q$ columns in $A_0$ and $G_0$.

Apply \Cref{thm:SISalg} to the matrix
$C=[A_1^{\mathsf T},G_1^{\mathsf T}]^{\mathsf T}$.
The resulting binary vectors $\bz\neq\bz'$ have equal images under both
$A_1$ and $G_1$.
This equality under $G_1$ is essential: both vectors now require the
same residual image $\mathbf{b}-G_1\bz$ under $G_0$.

Binary invertibility gives a unique binary vector $\bu$ with
$G_0\bu=\mathbf{b}-G_1\bz$.
We compute it from the last row to the first, setting each gadget block
to the binary decomposition of its residual right-hand side.
The same $\bu$ completes both vectors, giving
$\bx=[\bu^{\mathsf T},\bz^{\mathsf T}]^{\mathsf T}$ and
$\by=[\bu^{\mathsf T},(\bz')^{\mathsf T}]^{\mathsf T}$.
They satisfy $G\bx=G\by=\mathbf{b}$, and $\bu$ cancels in
$A\bx-A\by=A_1(\bz-\bz')=0$.
The instance has $n+d$ rows and $m-d\log q$ columns, giving
exactly the density and running time in \Cref{cor:dense-wcsis}.
Constructing the instance and completing the vectors take polynomial
time.

The other reductions follow the same template: construct a dense
SIS instance and convert its collision to the required solution. For
integer vector sums, the modulus is large enough to turn a modular
equality into an integer equality. For Dense Blichfeldt, binary expansion
encodes the box, and Dense Minkowski uses the box $\{0,1\}^m$.
\Cref{sec:corollary-proofs} gives the full reductions.

\Needspace{10\baselineskip}
\subsection{Related work}

Sotiraki, Zampetakis, and Zirdelis~\cite{DBLP:conf/focs/SotirakiZZ18}
proved that \emph{constrained SIS} is $\mathsf{PPP}$-complete and weak
constrained SIS is $\mathsf{PWPP}$-complete. Their construction also
gives the hash family $\mathcal H_{\mathsf{cSIS}}$, whose collision
problem is weak constrained SIS. These completeness results depend on
the additional constraints in $G$.

For Equal Subset Sum without the pigeonhole promise, successive
improvements~\cite{MNPW19,RW26} yield the best known running time of
$1.7067^m\cdot\operatorname{poly}(|I|)$ for $m$ inputs.

Under the pigeonhole promise, faster algorithms are known for Equal Sums.
Jin and Wu~\cite{JW24} gave a
randomized $2^{0.4m}\cdot\operatorname{poly}(|I|)$-time algorithm.
Jin, Williams, and Zhang~\cite{JWZ25} improved it to
$2^{m/3}\cdot\operatorname{poly}(|I|)$.
Bhati et al.~\cite{BhatiDJRW26} obtain
$M^{o(1)}\cdot\operatorname{poly}(m)$ time for integers bounded by $M$
when $m=\omega(\log M)$.
We extend their block-merging method to modular systems and use their
scalar reduction to transfer the conditional lower bound.

A related line of work uses Frobenius-number bounds to obtain faster
algorithms for total Unbounded Subset Sum.
For distinct positive integers $a_1,\ldots,a_m$ with greatest common divisor
$1$, this problem asks for nonnegative integer coefficients $x_i$ such that
$\sum_i a_i x_i=b$.
Writing $M=\max_i a_i$, Aggarwal, Joux, Santha, and
W\k{e}grzycki~\cite{AJSW24} give a polynomial-time algorithm when
$b\geq\varepsilon M^2$ for any fixed $\varepsilon>0$, and analogous
results for integer programming with equalities.
Chen, Mao, and Zhang~\cite{CMZ25} subsequently gave an
expected $\mathcal O(m\log M)$-time algorithm when
$b\geq C M^2/m$ for a sufficiently large absolute constant $C$,
using constructive results from additive combinatorics.

For the modular pigeonhole problem in one row, Allcock et al.~\cite{AHJ22}
give a $\widetilde{\mathcal O}(2^{m/2})$-time algorithm
whenever $q<2^m$. Using only $\log q+1$ columns gives the
$q^{1/2}\cdot\operatorname{poly}(|I|)$ bound for one-row SIS.
A classical algorithm of Regev, described by Chen, Liu, and
Zhandry~\cite[Appendix~A]{CLZ22}, solves binary collision SIS in
polynomial time when $m\geq(n+1)^{\log q}$.
\Cref{thm:SISalg} gives a subexponential bound already when
$m=\omega(n\log q)$.

The number-balancing reductions of Hoberg et al.~\cite{HRRY17}
concern general convex bodies.
\Cref{cor:minkowski} concerns only the modular lattices
$\Lambda_q^\perp(A)$ with $m>n\log q$.

Wagner's generalized birthday algorithm~\cite{Wagner02} merges lists
while imposing stronger conditions on their partial sums at each level.
Ducas, Engelberts, and Loyer~\cite{DEL25} use a lattice form of this
method and discrete Gaussian sampling to give subexponential algorithms
for SIS in the infinity norm. Their guarantees concern random matrices,
polynomial prime moduli, and norm bounds larger than one.
Our algorithm uses a power-of-two modulus and works for every input
matrix. It keeps a largest congruence class at each merge, whose size
is bounded below by counting.

%% file: figure1.tex
\begin{figure}[!b]
\centering
\begin{minipage}[c]{0.49\textwidth}
\centering
\parbox[c][55mm][c]{\linewidth}{\centering
\begin{tikzpicture}[x=0.85cm,y=0.88cm,font=\sffamily\footnotesize,
    every node/.style={inner sep=2pt,align=center},
    reduction/.style={->,>=stealth,line width=0.8pt,
        shorten <=1pt,shorten >=1pt}]
\node[font=\sffamily\bfseries\footnotesize] (ppp) at (3.7,4.35) {PPP};
\node (blichfeldt) at (1.2,3.75) {BLICHFELDT};
\node (csis) at (5.45,3.75) {cSIS};
\node (minkowski) at (1.2,2.25) {MINKOWSKI};
\node[font=\sffamily\bfseries\footnotesize] (pwpp) at (3.7,2.25) {PWPP};
\node (wcsis) at (6.15,2.25) {weak-cSIS};
\node (equal) at (1.2,0.5) {EQUAL-SUMS};
\node (sis) at (3.7,0.5) {SIS};
\draw[reduction] ([yshift=2pt]blichfeldt.east) -- ([yshift=2pt]ppp.west);
\draw[reduction] ([yshift=-2pt]ppp.west) -- ([yshift=-2pt]blichfeldt.east);
\draw[reduction] ([yshift=2pt]ppp.east) -- ([yshift=2pt]csis.west);
\draw[reduction] ([yshift=-2pt]csis.west) -- ([yshift=-2pt]ppp.east);
\draw[reduction] (pwpp.north) -- (ppp.south);
\draw[reduction] ([yshift=2pt]pwpp.east) -- ([yshift=2pt]wcsis.west);
\draw[reduction] ([yshift=-2pt]wcsis.west) -- ([yshift=-2pt]pwpp.east);
\draw[reduction] (minkowski.north) -- (blichfeldt.south);
\draw[reduction] (equal.north) -- (minkowski.south);
\draw[reduction] (sis.north west) -- (minkowski.south east);
\draw[reduction] (sis.north) -- (pwpp.south);
\end{tikzpicture}}
\end{minipage}\hfill
\begin{minipage}[c]{0.49\textwidth}
\centering
\parbox[c][55mm][c]{\linewidth}{\centering
\begin{tikzpicture}[x=0.84cm,y=0.82cm,font=\sffamily\footnotesize,
    every node/.style={inner sep=2pt,align=center},
    reduction/.style={->,>=stealth,line width=0.8pt,
        shorten <=1pt,shorten >=1pt},
    edge label/.style={font=\sffamily\footnotesize,fill=white,inner sep=1.5pt},
    middle/.style={minimum height=7.5mm},
    bottom/.style={minimum height=5.5mm}]
\node[font=\sffamily\bfseries\footnotesize] at (3.6,5.3) {Dense regime};
\node (sis) at (3.6,4.4) {BINARY SIS};
\node[middle] (wcsis) at (0.95,2.9) {weak-cSIS};
\node[middle] (integer) at (6.25,2.9) {EQUAL VECTOR\\SUMS};
\node[bottom] (weakblichfeldt) at (3.6,1.7) {DENSE\\BLICHFELDT};
\node[bottom] (minkowski) at (3.6,-0.5) {DENSE\\MINKOWSKI};
\node[bottom] (pess) at (6.25,-0.3) {EQUAL-SUMS};
\draw[reduction] (wcsis.north) -- node[edge label,above,sloped]
    {Cor.~\ref{cor:dense-wcsis}} (sis.south west);
\draw[reduction] (integer.north) -- node[edge label,above,sloped]
    {Cor.~\ref{cor:integer-sums}} (sis.south east);
\draw[reduction] (weakblichfeldt.north) -- node[edge label,left,pos=0.45]
    {Cor.~\ref{cor:weak-blichfeldt}} (sis.south);
\draw[reduction] (minkowski.north) -- node[edge label,left]
    {Cor.~\ref{cor:minkowski}} (weakblichfeldt.south);
\draw[reduction] (pess.north) -- node[edge label,right] {$n=1$} (integer.south);
\end{tikzpicture}}
\end{minipage}
\caption{Selected reductions from~\cite[Figure~3]{DBLP:conf/focs/SotirakiZZ18}
(left). Roadmap of our reductions (right).}
\label{fig:pigeonhole-classes}
\end{figure}

%% file: prelim.tex
\section{Preliminaries}\label{sec:prelim}

We write $[m]=\{1,\ldots,m\}$, $[a,b] = \{a, \ldots, b\}$, $[a, b) = [a, b-1]$.
All logarithms have base $2$, and $|I|$ denotes the binary input length.
Vectors are column vectors written in bold (e.g., $\mathbf{x}$), with binary
vectors additionally underlined (e.g., $\bx$).
We write $[\mathbf{u}^{\mathsf T},\mathbf{v}^{\mathsf T}]^{\mathsf T}$ for their vertical concatenation.
Matrix equations over $\mathbb{Z}_q$ are read modulo $q$.
Vector congruences and remainders are taken coordinatewise.
The vectors $\mathbf{0}_n$
and $\mathbf{1}_n$ have all coordinates zero and one, respectively.
For $B,C\subseteq\mathbb{Z}_q^n$, their sumset is
$B+C=\{\mathbf{b}+\mathbf{c}:\mathbf{b}\in B,\ \mathbf{c}\in C\}$.

\subsection{Total search and pigeonhole classes}\label{sec:pigeonhole-classes}

A search problem is \emph{total} if every input has a solution.
The class $\mathsf{TFNP}$ consists of total search problems whose
solutions have size polynomial in input size and that can be verified in polynomial
time. A Boolean circuit is a directed acyclic graph with ordered input and
output nodes and gates computing $\neg$, $\land$, $\lor$, or $\xor$.
A negation gate has one input and the other gates have two. Evaluating
its gates in topological order defines a function
$\mathcal C:\{0,1\}^r\to\{0,1\}^s$,
where $r$ and $s$ count the input and output nodes, respectively.
Its size $|\mathcal C|$ is the number of nodes.

The search problem \textsc{Pigeonhole Circuit} takes a Boolean circuit
$\mathcal C:\{0,1\}^r\to\{0,1\}^r$ and asks for either an input
$\bx$ with $\mathcal C(\bx)=\mathbf{0}_r$ or distinct inputs
$\bx,\by$ with $\mathcal C(\bx)=\mathcal C(\by)$.
If no input maps to zero, the $2^r$ inputs have at most $2^r-1$
possible outputs, so a collision exists.
The class $\mathsf{PPP}$ consists of search problems reducible to
\textsc{Pigeonhole Circuit} by polynomial-time search
reductions~\cite{Papadimitriou}.

The search problem \textsc{Collision} takes as input a Boolean circuit
$\mathcal C$ with $r$ inputs and $s<r$ outputs and asks for two distinct
vectors $\bx,\by\in\{0,1\}^r$ such that
$\mathcal C(\bx)=\mathcal C(\by)$.
There are at least twice as many inputs as possible outputs.
The class $\mathsf{PWPP}$ consists of search problems reducible to
\textsc{Collision} by polynomial-time search
reductions~\cite{DBLP:conf/focs/SotirakiZZ18}.
Thus, $\mathsf{PWPP}\subseteq\mathsf{PPP}\subseteq\mathsf{TFNP}$.
A problem is complete for one of these classes if it belongs to the
class and every problem in the class reduces to it.

Observe that binary collision SIS (\Cref{def:sis}) belongs to $\mathsf{PWPP}$. Namely, one can 
encode $A\bx$ in binary to get a circuit with $m$ input bits and
$n\log q<m$ output bits. The reductions in \Cref{sec:corollary-proofs} also place Equal Vector
Sums, Dense Minkowski, and Dense Blichfeldt in $\mathsf{PWPP}$ under
their respective input promises.

\emph{Equal Sums}~\cite{Papadimitriou} takes positive integers $a_1,\ldots,a_m$ with
$\sum_i a_i<2^m-1$ and asks for distinct subsets $S,T\subseteq[m]$
such that $\sum_{i\in S}a_i=\sum_{i\in T}a_i$.
Write $M=\max_i a_i$.
There are $2^m$ subsets and fewer than $2^m$ possible sums, so a
solution exists.

\subsection{Lattices and their search problems}\label{sec:lattice-prelim}

A \emph{full-rank lattice} in $\mathbb{R}^m$ is a set
$\mathcal L=D\mathbb{Z}^m$, where $D\in\mathbb{R}^{m\times m}$ is
nonsingular. Its columns form a \emph{basis} of $\mathcal L$, and its
\emph{determinant} is $\det(\mathcal L)=|\det D|$.
For computational problems, bases have rational entries and are
encoded in binary.
We use the infinity norm $\|\mathbf{v}\|_\infty=\max_i|v_i|$ and the
Euclidean norm $\|\mathbf{v}\|_2=(\sum_i v_i^2)^{1/2}$.

A \emph{lattice coset} is a translate
$\mathbf{p}+\mathcal L=\{\mathbf{p}+\mathbf{v}:\mathbf{v}\in\mathcal L\}$.
Two points $\mathbf{x},\mathbf{y}$ lie in the same coset exactly when
$\mathbf{x}-\mathbf{y}\in\mathcal L$.
If $\mathcal L\subseteq\mathbb{Z}^m$ is a full-rank integer lattice,
then $\det(\mathcal L)$ equals the number of its cosets in
$\mathbb{Z}^m$.

For $A\in\mathbb{Z}_q^{n\times m}$, we use the \emph{modular lattice}
$ \Lambda_q^\perp(A)=
    \{\mathbf{z}\in\mathbb{Z}^m:A\mathbf{z}=0\pmod q\}$.
It contains $q\mathbb{Z}^m$ and therefore has full rank.
Two integer vectors lie in the same coset of $\Lambda_q^\perp(A)$
exactly when they have the same image under $A$.
Since there are at most $q^n$ images, we have
$1\leq\det(\Lambda_q^\perp(A))\leq q^n$.

Minkowski's theorem~\cite{Minkowski10} guarantees that every full-rank
lattice $\mathcal L\subseteq\mathbb{R}^m$ contains a nonzero vector
$\mathbf{v}$ with
$\|\mathbf{v}\|_\infty\leq\det(\mathcal L)^{1/m}$.
Minkowski's problem asks for such a vector given a lattice basis.
Notice that \Cref{def:minkowski} restricts the input to modular lattices with
$m>n\log q$.

Blichfeldt's theorem~\cite{Blichfeldt14} states that a measurable set
$S\subseteq\mathbb{R}^m$ of volume greater than $\det(\mathcal L)$
contains distinct points $\mathbf{x},\mathbf{y}$ with
$\mathbf{x}-\mathbf{y}\in\mathcal L$.
For integer points, the counting argument is enough: any set
$S\subseteq\mathbb{Z}^m$ with $|S|>\det(\mathcal L)$ contains two
points in the same coset of a full-rank integer lattice $\mathcal L$.
This guarantees a solution to Dense Blichfeldt (\Cref{def:weak-blichfeldt}).
The more general computational formulations of Minkowski and
Blichfeldt are studied in~\cite{DBLP:conf/focs/SotirakiZZ18}.

\Needspace{9\baselineskip}
\subsection{Binary invertibility and weak constrained SIS}
\label{sec:binary-invertibility}

For a power-of-two $q\geq2$, the gadget vector is the column
$\boldsymbol{\gamma}_q=(2^0,2^1,\ldots,2^{\log q-1})^\top$.
For $a\in\mathbb{Z}_q$, let $\operatorname{bin}(a)\in\{0,1\}^{\log q}$
be the binary decomposition of its representative in $\{0,\ldots,q-1\}$,
with the least significant bit first. Thus,
$\boldsymbol{\gamma}_q^\top\operatorname{bin}(a)=a\pmod q$.
For vectors, $\operatorname{bin}$ concatenates the decompositions of
successive coordinates.

\begin{definition}[Binary invertible matrix~\cite{DBLP:conf/focs/SotirakiZZ18}]\label{def:binary-invertible}
Let $q\geq2$ be a power of two and $m\geq d\log q$.
A matrix $G\in\mathbb{Z}_q^{d\times m}$ is \emph{binary invertible} if
$G=[I_d\otimes\boldsymbol{\gamma}_q^\top+U\ \ V]$, where
$U\in\mathbb{Z}_q^{d\times d\log q}$ satisfies $U_{i,j}=0$ for $j\leq i\log q$,
and $V\in\mathbb{Z}_q^{d\times(m-d\log q)}$ is arbitrary.
Here $\otimes$ is the Kronecker product, and matrix indices start at $1$.
\end{definition}

In words, divide the first $d\log q$ columns into $d$ blocks.
Row $i$ contains the binary weights\linebreak  $1,2,4,\ldots,2^{\log q-1}$ in block
$i$, zeros in earlier blocks, and arbitrary entries in later columns.
For every $\bz\in\{0,1\}^{m-d\log q}$ and target
$\mathbf{b}\in\mathbb{Z}_q^d$, there is a unique
$\bu\in\{0,1\}^{d\log q}$ such that
$G[\bu^{\mathsf T},\bz^{\mathsf T}]^{\mathsf T}=\mathbf{b}$.

To compute $\bu$, process rows $i=d,d-1,\ldots,1$.
Subtract the contribution of the already chosen coordinates from
$b_i$ modulo $q$, and write this remainder in binary to fill block $i$.
This is the only choice satisfying row $i$, and it does not affect the
rows already processed. Thus, the procedure gives the unique $\bu$.
After computing $V\bz$, it uses
$\mathcal O(d^2\log q)$ modular arithmetic operations.

A standard brute-force algorithm gives the following bound.

\begin{observation}[Brute-force]\label{lem:wcsis-enumeration}
Every weak constrained SIS instance, also with an arbitrary target
$\mathbf{b}\in\mathbb{Z}_q^d$, can be solved in time
$q^n\cdot\operatorname{poly}(|I|)$.
\end{observation}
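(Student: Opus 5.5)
The plan is to reduce to plain enumeration over the free coordinates and use binary invertibility to complete each choice, exactly as in the counting argument after Problem~\ref{def:wcsis}. Fix the target $\mathbf{b}\in\mathbb{Z}_q^d$ and split $A=[A_0\ A_1]$, $G=[G_0\ G_1]$, where $A_0$ and $G_0$ consist of the first $d\log q$ columns. For a vector $\bz\in\{0,1\}^{m-d\log q}$, let $\bu(\bz)$ denote the unique binary vector with $G_0\bu(\bz)=\mathbf{b}-G_1\bz$. By the row-by-row procedure described after \Cref{def:binary-invertible}, $\bu(\bz)$ can be computed with $\mathcal O(d^2\log q)$ modular operations once $G_1\bz$ is known, so in $\operatorname{poly}(|I|)$ time. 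Define the completed vector $\bx(\bz)=[\bu(\bz)^{\mathsf T},\bz^{\mathsf T}]^{\mathsf T}$. By construction it satisfies $G\bx(\bz)=\mathbf{b}$, and distinct $\bz$ give distinct $\bx(\bz)$ because they differ already in the $\bz$-part.

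The algorithm enumerates $\bz$ over any fixed set $Z$ of $q^n+1$ distinct binary vectors of length $m-d\log q$. Such a set exists because $2^{m-d\log q}>q^n$, which follows from $m>(n+d)\log q$; for instance, take the first $q^n+1$ vectors in lexicographic order. For each $\bz\in Z$ it computes $\bx(\bz)$ and the image $A\bx(\bz)\in\mathbb{Z}_q^n$, and inserts the image into a dictionary keyed by its value (or sorts the list of images). There are only $q^n$ possible images and $q^n+1$ vectors, so by pigeonhole two of them, $\bx(\bz)$ and $\bx(\bz')$ with $\bz\neq\bz'$, share an image. This pair is distinct, satisfies $A\bx=A\by$, and satisfies $G\bx=G\by=\mathbf{b}$, so it is a valid output. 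Note that, unlike the reduction in the overview, we do not need $G_1\bz=G_1\bz'$: each vector receives its own completion $\bu$, and the pigeonhole argument is applied directly to the full images under $A$.

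For the running time, there are $q^n+1$ iterations, each costing $\operatorname{poly}(|I|)$ arithmetic on $\log q$-bit numbers, and detecting the repeated image by sorting costs $\mathcal O(q^n\cdot n\log q\cdot\log(q^n))$. The total is $q^n\cdot\operatorname{poly}(|I|)$. The only step needing care is checking that enumerating the first $q^n+1$ vectors of $\{0,1\}^{m-d\log q}$ costs no more than $q^n$ times a polynomial. This holds because we can count up in binary on the last $\lceil n\log q\rceil+1\le m-d\log q$ coordinates and leave the rest zero. There is no genuine obstacle.
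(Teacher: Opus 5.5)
Your proof is correct and follows the paper's argument: choose $q^n+1$ distinct free suffixes (possible since $2^{m-d\log q}>q^n$), complete each to a binary vector $\bx$ with $G\bx=\mathbf{b}$ via binary invertibility, and find two equal images under $A$ by sorting. Your extra remarks spell out details the paper leaves implicit: distinct suffixes give distinct completions, no agreement under $G_1$ is needed, and the suffixes can be generated by counting on $n\log q+1\le m-d\log q$ coordinates.
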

\begin{proof}
Since $2^{m-d\log q}>q^n$, choose $q^n+1$ distinct binary vectors
of length $m-d\log q$. Complete each vector to a binary vector $\bx$
with $G\bx=\mathbf{b}$ using the procedure above.
The completed vectors are distinct and their images under $A$ lie in
$\mathbb Z_q^n$, so two images coincide. Completing the vectors,
evaluating $A$, and sorting the images to find a repetition take
$q^n\cdot\operatorname{poly}(|I|)$ total time.
\end{proof}

\subsection{The lattice hypothesis and lower bound}
\label{sec:lattice-lower-bound}

Bhati et al.~\cite[Section~5]{BhatiDJRW26} give a conditional lower bound for
Equal Sums.
For a full-rank lattice $\mathcal L\subseteq\mathbb{R}^N$,
$\lambda_N(\mathcal L)$ is the least radius of a Euclidean ball centered
at $0$ containing $N$ linearly independent lattice vectors.
The hypothesis concerns the \emph{$N$-th Successive Minima Problem},
which asks for one nonzero vector of length at most a given multiple
of $\lambda_N(\mathcal L)$.
It states that, for every constant $c\geq1$, there is an
$\varepsilon>0$ such that no
$\mathcal O(2^{\varepsilon N})$-time algorithm, given $\mathcal L$,
returns with high probability a nonzero $\mathbf{v}\in\mathcal L$ with
$\|\mathbf{v}\|_2\leq\mathcal O(N^c)\lambda_N(\mathcal L)$.
Here and below, the running times include a polynomial factor in the
binary encoding length of the lattice basis.

\SISlowerbound*

\begin{proof}
Recall that Bhati et al.~\cite[Section~5]{BhatiDJRW26} rule out a
$M^{o(1/\log m)}\cdot\operatorname{poly}(m)$-time algorithm for
Equal Sums. Their reduction gives hard instances
with $m$ positive integers bounded by
$M=2^{\Theta(\sqrt m\log m)}$.
Choosing the smallest power of two $q>mM$ gives $mM<q\leq2mM$.
Form the binary collision SIS instance
$A=[a_1,\ldots,a_m]\in\mathbb Z_q^{1\times m}$.
All subset sums lie in $[0,q)$, so a modular collision $A\bx=A\by$
gives distinct subsets $S=\{i:x_i=1\}$ and $T=\{i:y_i=1\}$ with equal
integer sums. The instance has $n=1$,
$\log q=\Theta(\log M)$, and
$\delta=m/\log q=\Theta(\sqrt m/\log m)=m^{1/2+o(1)}$.
In particular, $\delta>1$ for sufficiently large $m$, so this is a valid
binary collision SIS instance.
The reduction takes polynomial time, so a binary collision SIS algorithm
with running time $q^{o(1/\log\delta)}\cdot\operatorname{poly}(|I|)$
would solve Equal Sums in time
$M^{o(1/\log m)}\cdot\operatorname{poly}(m)$.
\end{proof}

%% file: main_proof.tex
\section{Main proof}\label{sec:main-proof}


\input{proof_sis}

%% file: proof_sis.tex
\SISalg*
\begin{proof}
    If $\delta \leq 4$, we can find such distinct $\bx$, $\by$ by brute force in time $q^{\Oh(n)} \cdot \operatorname{poly}(|I|)$.
    Henceforth assume $\delta > 4$.
    Let $k \geq 2$ be the largest integer power of 2 such that $k/\log{k} \leq \delta/2$ and $k \leq m$.

    
    For $j \in [0, \log{k}]$ define a~family $T^j = \{T_i^j : i \in [0, \tfrac{k}{2^j})\}$ of integer subintervals of $[m]$:
    \[T_i^j \coloneqq \left[ \floor{\tfrac{i \cdot 2^j \cdot m}{k}} + 1,\ \floor{\tfrac{(i + 1) \cdot 2^j \cdot m}{k}} \right] \qquad\text{for }i \in [0, \tfrac{k}{2^j}), \]
    and the families of Boolean indicator vectors of the subsets of $T_i^j$:
    \[X_i^j \coloneqq \set{\bx \in \set{0, 1}^m : \forall_{ \ell \in [m] \setminus T_i^j} \ \bx_\ell = 0}.\]
    Note that for $j \in [\log{k}]$ we have
    \[T_i^{j} = T_{2i}^{j - 1} \uplus T_{2i + 1}^{j - 1} \quad\quad \text{and} \quad\quad X_i^j = X_{2i}^{j-1} + X_{2i+1}^{j-1} = \set{\bx^1 + \bx^2 \mid \bx^1 \in X_{2i}^{j - 1}, \bx^2 \in X_{2i + 1}^{j - 1}}.\]
    Moreover, each $T_i^j$ has size at least $2^j\cdot \floor{\tfrac{m}{k}}$.
    Finally, let $W := {2^{\ceil{\delta n \log q / (2k)}}}$. 
    \begin{claim}
        $W \leq 2^{\floor{m / k}}$.
    \end{claim}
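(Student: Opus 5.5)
The plan is to compare exponents: it suffices to show $\ceil{\delta n\log q/(2k)}\leq\floor{m/k}$, and then use that $x\mapsto 2^x$ is monotone.

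\emph{Step 1: rewrite the exponent.} By the definition $\delta=m/(n\log q)$ we have $\delta n\log q=m$. Hence the exponent of $W$ equals $\ceil{m/(2k)}$. Setting $t\coloneqq m/k$, the claim reduces to the inequality $\ceil{t/2}\leq\floor{t}$.

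\emph{Step 2: use $k\leq m$.} The choice of $k$ guarantees $k\leq m$, so $t\geq1$. Write $a\coloneqq\floor{t}$, so that $a\geq1$ and $t<a+1$.

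\emph{Step 3: the integer inequality.} Because $a\geq1$, we get $t<a+1\leq 2a$, that is, $t/2<a$. Since $a$ is an integer with $a\geq t/2$, and $\ceil{t/2}$ is the smallest such integer, we conclude $\ceil{t/2}\leq a=\floor{t}$. Exponentiating gives $W=2^{\ceil{m/(2k)}}\leq 2^{\floor{m/k}}$.

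The only point that needs care is Step 3. The factor $2$ in the definition of $W$ absorbs the rounding loss of $\floor{\cdot}$ versus $\ceil{\cdot}$, and this works only because $m/k\geq1$. That is exactly why the definition of $k$ includes the condition $k\leq m$; the condition $k/\log k\leq\delta/2$ plays no role here.
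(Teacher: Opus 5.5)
Your proof is correct and follows the paper's argument exactly: both rewrite $\log W=\lceil m/(2k)\rceil$ via $\delta n\log q=m$ and use $m/k\geq 1$ (from $k\leq m$) to get $\lceil m/(2k)\rceil\leq\lfloor m/k\rfloor$. The only difference is that you spell out the integer inequality $\lceil t/2\rceil\leq\lfloor t\rfloor$ for $t\geq1$, which the paper leaves implicit.
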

    \begin{proof}
        We have $\log{W} = \ceil{\delta n \log q / (2k)} = \ceil{m / (2k)} \leq \floor{m / k}$, where the inequality holds since $m / k \geq 1$.
    \end{proof}
    In particular, each set $X_i^0$ has size at least $2^{\floor{m / k}} \geq W$.

    
    \medskip

    We now present the algorithm.
    In the entire routine, we regard the input (the entries of $A$) as integers in $[0, q)$ instead of remainders $\bmod\; q$, and all products and sums in this proof are computed over the integers.

    We begin by initializing the integer vector $\boldmu^0=\mathbf{1}_n$.
    Next, for each $i \in [0, k)$ we build a~set $Y_i^0$ of size $W$ by taking any $W$ vectors $\bx \in X_i^0$ and adding $A\bx$ to $Y_i^0$.
    If, during this step, we come across two vectors $\bx^1, \bx^2$ such that $A\bx^1 = A\bx^2$, we declare success immediately with $(\bx^1, \bx^2)$ as the sought collision.
    Otherwise, create an~explicit mapping from the entries of $Y_i^0$ to the binary vectors of $X_i^0$ and call it $f$.
    That is, for every $\mathbf{y}\in Y_i^0$ write $f(\mathbf{y})$ to refer to the vector $\bx \in X_i^0$ with $A\bx=\mathbf{y}$ that produced $\mathbf{y}$.
    We stress that here $f$ is understood separately for each set $Y_i^0$.
    
    Now the algorithm proceeds in $\log{k}$ \emph{rounds}, indexed by $j = 1, \ldots, \log k$.
    On the $j$th round:
    \begin{enumerate}
        \item Construct a vector $\boldmu^j$ from $\boldmu^{j-1}$ by performing ${\log{W}}$ coordinate doublings, every time selecting a coordinate with the smallest value, breaking ties by the smallest index.
        \item For each $i \in [0, \tfrac{k}{2^j})$:
        \begin{itemize}
            \item Compute ${Z} \coloneqq \set{\mathbf{y}^1 + \mathbf{y}^2 : \mathbf{y}^1 \in Y_{2i}^{j-1}, \mathbf{y}^2 \in Y_{2i + 1}^{j-1}}$. If we spot a collision, i.e., we find two different pairs of vectors $(\mathbf{y}^1, \mathbf{y}^2), ((\mathbf{y}^1)', (\mathbf{y}^2)')$ with $\mathbf{y}^1+\mathbf{y}^2=(\mathbf{y}^1)'+(\mathbf{y}^2)'$, we immediately terminate and return the collision $f(\mathbf{y}^1)+f(\mathbf{y}^2)$, $f((\mathbf{y}^1)')+f((\mathbf{y}^2)')$. 
            Otherwise, for each $\mathbf{z}=\mathbf{y}^1+\mathbf{y}^2$, we define the mapping $f(\mathbf{z})=f(\mathbf{y}^1)+f(\mathbf{y}^2)$.

            


            \item Create the partition $\mathcal{P}$ of $Z$ into subsets based on the congruence classes of the vectors in $Z$ $\bmod\; \boldmu^j$; that is, for each $\mathbf{y}, \mathbf{y}'$ in the same part, we have $\mathbf{y} \equiv \mathbf{y}' \pmod{\boldmu^j}$.


            \item Choose any set $P \in \mathcal{P}$ of largest size and let $Y_i^j \subseteq P$ be an arbitrary set of $W$ vectors.
        \end{itemize}
    \end{enumerate}

    \subparagraph{Correctness.}
    It is straightforward to see that throughout the algorithm the following invariants are preserved for every $j \in [0, \log{k}]$ and $i \in [0, \tfrac{k}{2^j})$:
    \begin{itemize}
        \item each $Y_i^j$ has size $W$;
        \item for each $\mathbf{y}, \mathbf{y}' \in Y_i^j$, we have $\mathbf{y} \equiv \mathbf{y}' \pmod{\boldmu^j}$;
        \item for each $\mathbf{y}\in Y_i^j$, we have $A\,f(\mathbf{y})=\mathbf{y}$ and $f(\mathbf{y})\in X_i^j$;
        \item $\prod_{\ell\in[n]} \boldmu_\ell^j =  2^{\log{W}\cdot j}$; each $\boldmu^j_\ell$ is a power of $2$, and for every $\ell, \ell'$, we have $\boldmu_\ell^j\leq 2\boldmu_{\ell'}^{j}$.
    \end{itemize}

    Now consider a~round $j \in [\log k]$ and an~index $i \in [0, \tfrac{k}{2^j})$.
    If a~collision is spotted, the output is correct by the invariants above; so assume otherwise.
    First, the size of the partition $\mathcal{P}$ is at most $W$: indeed, by the invariant in round $j-1$, all vectors of $Y_{2i}^{j-1}$ are pairwise congruent to each other $\bmod\; \boldmu^{j-1}$, and the same holds for $Y_{2i+1}^{j-1}$.
    Hence, all vectors of $Z$ are also pairwise congruent to each other $\bmod\; \boldmu^{j-1}$.
    $\mathcal{P}$ partitions these vectors into congruence classes $\bmod\; \boldmu^j$; hence there are at most
    \[
        \left( \prod_{\ell\in [n]} \boldmu_\ell^j \right) /
        \left( \prod_{\ell\in [n]} \boldmu_\ell^{j-1} \right) = 2^{\log W} = W
    \]
    congruence classes, and so $|\mathcal{P}| \leq W$.
    Then, since $|Z| = |Y_{2i}^{j-1}| \cdot |Y_{2i+1}^{j-1}| = W^2$, we conclude that some part $P \in \mathcal{P}$ has size at least $W$, as required.

    Suppose the algorithm finished without reporting a~collision, and consider the set $Y_0^{\log k}$.
    All vectors therein are pairwise congruent $\bmod\; \boldmu^{\log k}$.
    We have
    \[
        \prod_{\ell \in [n]} \boldmu_\ell^{\log k} =
        2^{\log k \cdot \log W} \geq
        2^{\log k \cdot \delta n \log q / (2k)} \geq
        2^{n \log q} = q^n,
    \]
    where the last inequality follows from the definition of $k$ such that $k / \log k \leq \delta/2$.
    Thus, all coordinates of $\boldmu^{\log k}$ are at least $q$: since $q$ is a~power of two, any coordinate smaller than $q$ would need to be upper-bounded by $q / 2$.
    But then the invariant would guarantee that all the remaining coordinates would be upper-bounded by $q$.
    Then the product of all entries of $\boldmu^{\log k}$ would not exceed $q^n / 2$ -- a~contradiction.
    Therefore, all coordinates of $\boldmu^{\log k}$ are actually divisible by $q$.
    Now take any two distinct vectors $\mathbf{y}, \mathbf{y}' \in Y_0^{\log k}$.
    We have $\mathbf{y} \equiv \mathbf{y}' \pmod{\boldmu^{\log k}}$, and so $\mathbf{y} \equiv \mathbf{y}' \pmod{q \cdot \mathbf{1}_n}$.
    Then the corresponding binary vectors $\bx = f(\mathbf{y})$, $\bx' = f(\mathbf{y}')$ yield a~collision pair $\pmod{q}$, which we can return from the algorithm.

    \subparagraph{Time analysis.}
    On initialization, we compute all sets $Y_i^0$ for $i \in [0, k)$ in total time $\Oh(k \cdot W\cdot \tfrac{m}{k}\cdot n) = W \cdot \operatorname{poly}(|I|)$: we have $|Y_i^0| = W$, and each element of $Y_i^0$ is formed by multiplying a~vector of $X_i^0$ -- with at most $\left\lceil \tfrac{m}{k} \right\rceil$ non-zero entries -- by $A$. 
    Now we analyze the time spent on the $j$th round. Fix an~iteration $i \in [0, \frac{k}{2^j})$. Each set $Y_i^j$ has size $W$. The set $Z$ has size $\Oh(W^2)$, so computing $Z$ and partitioning it take $W^2 \cdot \operatorname{poly}(|I|)$ time. Finding the largest $P \in \mathcal{P}$ requires $\Oh(W)$ time.
    Therefore, the $j$th round requires $W^2 \cdot \operatorname{poly}(|I|)$ time, so the total running time of the algorithm is 
    \[ W^2 \cdot \operatorname{poly}(|I|) = 2^{2\ceil{\delta n \log q /(2k)}} \cdot \operatorname{poly}(|I|) \leq q^{\delta n / k} \cdot \operatorname{poly}(|I|).\]
    The last polynomial factor absorbs the constant from rounding.
    As we have chosen $k \geq 2$ to be the largest power of $2$ such that $k \leq m$ and $2k \leq \delta \log{k}$, one of the following must hold:

    \begin{itemize}
        \item $2k > m$. Then $q^{\delta n / k} = 2^{\delta n \log q / k} = 2^{m / k} \leq 4$, so
        $q^{\delta n / k} \cdot \operatorname{poly}(|I|) = \operatorname{poly}(|I|)$.
        \item $4k > \delta \log(2k)$. Then $\delta n / k < 4n / \log(2k) < 4n / \log \delta$, with the second inequality following from $4k > \delta \log(2k) \geq 2\delta$.
        Hence, $q^{\delta n / k} \cdot \operatorname{poly}(|I|) = q^{\Oh(n / \log \delta)} \cdot \operatorname{poly}(|I|)$.
    \end{itemize}

    The required time bounds follow.
\end{proof}

%% file: corollary_proofs.tex
\section{Reductions and proofs of the corollaries}\label{sec:corollary-proofs}

Each proof in this section constructs an instance of another problem in polynomial
time and converts every solution back to the original instance.
Dense Minkowski reduces to Dense Blichfeldt, which reduces to binary
collision SIS. The other two reductions go directly to binary collision
SIS. \Cref{fig:pigeonhole-classes} shows these reductions.

\denseWcSIS*

\begin{proof}
We prove the bound also for an arbitrary target $\mathbf{b}\in\mathbb Z_q^d$.
Let $n' = n+d$ and $m'=m-d\log q$ and split the input matrices as
$A=[A_0\ A_1]$ and $G=[G_0\ G_1]$, where the first blocks have
$d\log q$ columns. Construct the binary collision SIS instance
\[
    C = \left[
        \begin{matrix} A_1 \\ G_1 \end{matrix}
        \right] \in \mathbb{Z}_q^{n' \times m'}.
\]
The density is $m'/(n'\log q)>1$, so the instance is valid.

Let $\bz\neq\bz'$ be any solution of this SIS instance.
Then $A_1\bz=A_1\bz'$ and $G_1\bz=G_1\bz'$ modulo $q$.
The completion procedure following \Cref{def:binary-invertible} computes
the unique binary prefix $\bu$ satisfying $G_0\bu=\mathbf{b}-G_1\bz$.
Since $G_1\bz=G_1\bz'$, the same prefix completes both suffixes to $\mathbf{b}$.

Return $\bx=[\bu^{\mathsf T},\bz^{\mathsf T}]^{\mathsf T}$ and
$\by=[\bu^{\mathsf T},(\bz')^{\mathsf T}]^{\mathsf T}$.
They are distinct because their suffixes differ. The equal images under
$G_1$ imply $G\bx=G\by=\mathbf{b}$, and
\[
    A\bx-A\by=A_1(\bz-\bz')=0\pmod q.
\]
Constructing $C$ and completing the prefix take polynomial time in
the input length. So \Cref{thm:SISalg} with $n'$ rows and $m'$ columns gives
$q^{\mathcal O(n'/\log\delta)}\cdot\operatorname{poly}(|I|)$ time.
\end{proof}

\equalVectorSums*

\begin{proof}
Given the vectors $\mathbf{a}_1,\ldots,\mathbf{a}_m$, compute $\Delta$ and the smallest
power of two $q>m\Delta$. Construct $A\in\mathbb Z_q^{n\times m}$
by reducing each column $\mathbf{a}_i$ modulo $q$.
The promise $\delta=m/(n\log q)>1$ makes the instance valid.

Let $\bx\neq\by$ be any solution. Define
$S=\{i:x_i=1\}$ and $T=\{i:y_i=1\}$, which are distinct subsets.
Their difference of integer sums is
\[
    \mathbf{v}=\sum_{i\in S}\mathbf{a}_i-\sum_{i\in T}\mathbf{a}_i
      =\sum_{i=1}^m(x_i-y_i)\mathbf{a}_i.
\]
We have $\mathbf{v}=0\pmod q$ and
$\|\mathbf{v}\|_\infty\leq\sum_i|x_i-y_i|\|\mathbf{a}_i\|_\infty
\leq m\Delta<q$.
Each coordinate of $\mathbf{v}$ is a multiple of $q$ of absolute value less than
$q$, so $\mathbf{v}=0$ over the integers. Common elements may be removed
from $S$ and $T$ if disjoint subsets are desired.

Since $q\leq2m\Delta$, we have $\log q=\mathcal O(\log m+\log\Delta)$ and so the bit length of the reduced instance,
$O(nm \log q)$, is polynomial in $|I|$.
Constructing $A$ and decoding $S,T$ take polynomial time in the input length, so \Cref{thm:SISalg}
gives the stated $q^{\mathcal O(n/\log\delta)}\cdot\operatorname{poly}(|I|)$
bound.
\end{proof}

\weakBlichfeldtBoxes*

\begin{proof}

Given a Dense Blichfeldt instance (Problem~\ref{def:weak-blichfeldt}), write
$A=[\mathbf{a}_1\ \cdots\ \mathbf{a}_m]$ and let $C\in\mathbb Z_q^{n\times\log|B|}$
be the matrix with columns $\mathbf{c}_{i,j}=2^j \mathbf{a}_i\pmod q$ for
$i\in[m]$ and $0\leq j<\log L_i$.
Since $\log|B|>n\log q$, this is a valid binary collision SIS instance with
density $\log|B|/(n\log q)=\delta$.
For a binary vector $\bu$ with these coordinates, define
$\phi(\bu)\in\mathbb Z^m$ by
\[
    \phi(\bu)_i\coloneqq \sum_{j=0}^{\log L_i-1}2^j u_{i,j}.
\]
Binary expansion gives a bijection from $\{0,1\}^{\log|B|}$ to
$\prod_i\{0,\ldots,L_i-1\}=B-\mathbf{p}$. We have
$C\bu=A\phi(\bu)\pmod q$.
Given a solution $\bu\neq\bu'$ with $C\bu=C\bu'$,
return $\mathbf{x}=\mathbf{p}+\phi(\bu)$ and $\mathbf{y}=\mathbf{p}+\phi(\bu')$.
The bijection gives distinct points in $B$, and
\[
    A(\mathbf{x}-\mathbf{y})=A(\phi(\bu)-\phi(\bu'))=C(\bu-\bu')=0\pmod q.
\]
Hence $\mathbf{x}-\mathbf{y}\in\Lambda_q^\perp(A)$, as required.

For the running-time analysis, the constructed matrix has $n\log|B|$ entries of $\log q$ bits, and its columns
can be computed by successive doublings modulo $q$.
Also, decoding the vectors and adding $\mathbf{p}$ take polynomial time.
So if we apply \Cref{thm:SISalg} with $n$ rows and $\log|B|$ columns,
we obtain an algorithm running in time $q^{\mathcal O(n/\log\delta)}\cdot\operatorname{poly}(|I|)$.
\end{proof}

\denseMinkowski*

\begin{proof}
    
We reduce Dense Minkowski (Problem~\ref{def:minkowski}) to
Dense Blichfeldt (Problem~\ref{def:weak-blichfeldt}).
Keep $A$ and $q$, set $\mathbf{p}=0$, and set $L_i=2$ for every $i\in[m]$, so 
the box is $B=\{0,1\}^m$.  The constructed Dense Blichfeldt instance is valid because
$\log|B|/(n\log q) > 1$ since $m > n \log{q}$.

Given a solution $\mathbf{x}\neq \mathbf{y}$ for this box, return $\mathbf{z}=\mathbf{x}-\mathbf{y}$.
Then $\mathbf{z}\in\Lambda_q^\perp(A)$ is nonzero and
$\mathbf{z}\in\{-1,0,1\}^m$, so $\|\mathbf{z}\|_\infty=1$.
Since $\Lambda_q^\perp(A)$ is a full-rank sublattice of $\mathbb Z^m$,
its determinant is a positive integer, hence at least $1$.
Thus $\|\mathbf{z}\|_\infty=1\leq\det(\Lambda_q^\perp(A))^{1/m}$,
which is Minkowski's required bound. Every nonzero vector of this
integer lattice has infinity norm at least $1$, so $\mathbf{z}$ is also shortest.

The instance can be constructed in polynomial time.
Applying \Cref{cor:weak-blichfeldt} with the same $n$, $q$, and $\delta$
yields the required $q^{\mathcal O(n/\log\delta)}\cdot\operatorname{poly}(|I|)$ time.
\end{proof}